\begin{document}
\title{Many worlds and modality in the interpretation of\\quantum mechanics: an algebraic approach}
\author{{\sc G. Domenech} \thanks{%
Fellow of the Consejo Nacional de Investigaciones Cient\'{\i}ficas
y T\'ecnicas (CONICET) } $^{1,3}$, \  {\sc H. Freytes} \thanks{%
Permanent address, Instituto Argentino de Matem\'atica IAM-CONICET
 } $^{*,2}$ \ and \ {\sc C. de Ronde}$^{3,4}$}
\date{}
\maketitle

\begin{center}

\begin{small}
1. Instituto de Astronom\'{\i}a y F\'{\i}sica del Espacio (IAFE)\\
Casilla de Correo 67, Sucursal 28, 1428 Buenos Aires - Argentina\\
2. Universita degli Studi di Cagliari, Via Is Mirrionis 1, 09123,
Cagliari - Italia \\ 3. Center Leo Apostel (CLEA)\\ 4. Foundations
of  the Exact Sciences (FUND) \\ Brussels Free University
 Krijgskundestraat 33, 1160 Brussels - Belgium
\end{small}
\end{center}

\begin{abstract}

Many worlds interpretations (MWI) of quantum mechanics avoid the
measurement problem by considering every term in the quantum
superposition as actual. A seemingly opposed solution is proposed by
modal interpretations (MI) which state that quantum mechanics does
not provide an account of what `actually is the case', but rather
deals with what `might be the case', i.e. with possibilities. In
this paper we provide an algebraic framework which allows us to
analyze in depth the modal aspects of MWI.  Within our general
formal scheme we also provide a formal comparison between MWI and
MI, in particular, we provide a formal understanding of why ---even
though both interpretations share the same formal structure--- MI
fall pray of Kochen-Specker (KS) type contradictions while MWI
escape them.

\end{abstract}

\begin{small}
\centerline{\em Key words: Contextuality, Kochen Specker,
orthomodular lattices, modality, many worlds.}
\end{small}

\bibliography{pom}

\newtheorem{theo}{Theorem}[section]

\newtheorem{definition}[theo]{Definition}

\newtheorem{lem}[theo]{Lemma}

\newtheorem{prop}[theo]{Proposition}

\newtheorem{coro}[theo]{Corollary}

\newtheorem{exam}[theo]{Example}

\newtheorem{rema}[theo]{Remark}{\hspace*{4mm}}

\newtheorem{example}[theo]{Example}

\newcommand{\proof}{\noindent {\em Proof:\/}{\hspace*{4mm}}}

\newcommand{\qed}{\hfill$\Box$}

\newpage

\section{INTRODUCTION: MANY WORLDS AND \\ MODALITY}

Today, almost 50 years after its birth in 1957, the many worlds
interpretation (MWI) of quantum mechanics has become one of the most
important lines of investigation within the many interpretations of
quantum theory. MWI is considered to be a direct conclusion from
Everett's first proposal in terms of `relative states'
\cite{Everett57}. Everett's idea was to let quantum mechanics find
its own interpretation, making justice to the symmetries inherent in
the Hilbert space formalism in a simple and convincing way
\cite{DeWittGraham}. In this paper we will not address the main
argumentative lines of discussion raised for and against MWI (see
for example \cite{WHE, Deutsch85, Dieks07}). Rather, we shall
concentrate in its relation to the formal structure of quantum
mechanics and provide an algebraic frame which will allow us to
discuss the notion of {\it logical possibility} within it.

The main idea behind MWI is that superpositions refer to collections
of worlds, in each of which exactly one value of an observable,
which corresponds to one of the terms in the superposition, is
realized. Apart from being simple, the claim is that it possesses a
natural fit to the formalism, respecting its symmetries. This
provides a solution  to the measurement problem by assuming that
each one of the terms in the superposition is {\it actual} in its
own correspondent world. Thus, it is not only the single value which
we see in `our world' which gets actualized but rather, that a
branching of worlds takes place in every measurement, giving rise to
a multiplicity of worlds with their corresponding actual values. The
possible splits of the worlds are determined by the laws of quantum
mechanics.

Another proposed solution to the so called measurement problem has
been developed in the frame of modal interpretations (MI)
\cite{vF73, Kochen85, Dieks88}. According to these interpretations
``the quantum formalism does not tell us what actually is the case
in the physical world, but rather provides us with a list of
possibilities and their probabilities. The modal viewpoint is
therefore that quantum theory is about what may be the case, in
philosophical jargon, quantum theory is about modalities''
\cite{Dieks07}. Instead of actualizing every term in the
superposition, MI claim that each term {\it remains possible},
evolving with the Schr\"{o}dinger equation of motion.

Although MWI and MI share the same formal orthodox scheme, there are
but few comparisons in the literature \cite{Dieks07, Hemmothesis}.
In this paper we develop an algebraic framework which allows us to
analyze and discuss the modal aspects of MWI. Within this new formal
account, we can also provide a rigorous comparison between MWI and
MI. In particular, we can give a formal understanding of why MI fall
pray of KS-type contradictions \cite{KS, Bacc95} while MWI escape
them.

In Section 2, we introduce basic notions about lattice theory that
will be necessary later. In section 3, we provide a general
discussion on contextuality and modality in quantum mechanics. In
section 4, we develop a new algebraic frame for MWI. In section 5,
we formally compare MWI to MI.

\section{BASIC NOTIONS}

Now we recall from \cite{KAL} and \cite{MM}  some notions of lattice
theory that will play an important role in what follows. Let
$\mathcal{L}$ be a lattice and $a,b \in \mathcal{L}$. We say that
$b$ {\it covers} $a$ iff $a < b$ and moreover there exists no $x\in
\mathcal{L}$ such that $a<x<b$ for any $x$. Suppose that
$\mathcal{L}$ is a bounded lattice with $0$ the minimum element and
$1$ the maximum element. An element $p\in \mathcal{L}$ is called an
{\it atom} iff $p$ covers $0$ and a {\it coatom} iff $1$ covers $p$.
$\mathcal{L}$ is said to be an {\it atomistic lattice} iff for each
$x\in {\mathcal{L}}-\{0\}$, $x = \bigvee\{p\leq x: p \hspace{0.2cm}
is \hspace{0.2cm} an \hspace{0.2cm} atom \}$. An element $c\in
\mathcal{L}$ is said to be a {\it complement} of $a$ iff $a\land c =
0$ and $a\lor c = 1$. Let ${\mathcal{L}}=\langle
{\mathcal{L}},\lor,\land, 0, 1\rangle$ be a bounded lattice. Given
$a, b, c$ in $\mathcal{L}$, we write: $(a,b,c)D$\ \ iff $(a\lor
b)\land c = (a\land c)\lor (b\land c)$; $(a,b,c)D^{*}$ iff $(a\land
b)\lor c = (a\lor c)\land (b\lor c)$ and $(a,b,c)T$\ \ iff
$(a,b,c)D$, (a,b,c)$D^{*}$ hold for all permutations of $a, b, c$.
An element $z$ of a lattice $\mathcal{L}$ is called {\it central}
iff for all elements $a,b\in \mathcal{L}$ we have $(a,b,z)T$ and $z$
is complemented. We denote by $Z(\mathcal{L})$ the set of all
central elements of $\mathcal{L}$ and it is called the {\it center}
of $\mathcal{L}$.

A {\it  lattice with involution} \cite{Ka} is an algebra $\langle
\mathcal{L}, \lor, \land, \neg \rangle$ such that $\langle
\mathcal{L}, \lor, \land \rangle$ is a  lattice and $\neg$ is a
unary operation on $\mathcal{L}$ that fulfills the following
conditions: $\neg \neg x = x$ and $\neg (x \lor y) = \neg x \land
\neg y$. An {\it orthomodular lattice} is an algebra $\langle
{\mathcal{L}}, \land, \lor, \neg, 0,1 \rangle$ of type $\langle
2,2,1,0,0 \rangle$ that satisfies the following conditions

\begin{enumerate}
\item
$\langle {\mathcal{L}}, \land, \lor, \neg, 0,1 \rangle$ is a bounded
lattice with involution,

\item
$x\land  \neg x = 0 $.

\item
$x\lor ( \neg x \land (x\lor y)) = x\lor y $

\end{enumerate}

We denote by ${\cal OML}$ the variety of orthomodular lattices. It
is well known that if $\mathcal{H}$ is a Hilbert space then
$\mathcal{L}(\mathcal{H})$, the lattice of closed subspaces of
$\mathcal{H}$, is an atomistic orthomodular lattice. {\it Boolean
algebras} are orthomodular lattices satisfying the {\it distributive
law} $x\land (y \lor z) = (x \land y) \lor (x \land z)$. We denote
by ${\bf 2}$ the Boolean algebra of two elements. If $\mathcal{L}$
is a bounded lattice then $Z(\mathcal{L})$ is a Boolean sublattice
of $\mathcal{L}$ {\rm \cite[Theorem 4.15]{MM}}.

Let $A$ be a Boolean algebra. A subset $F$ of $A$ is called a {\it
filter} iff it satisfies: if $a\in F$ and $a\leq x$ then $x\in F$
and if $a,b\in F$ then $a\land b \in F$. $F$ is a {\it proper
filter} iff $F\not = A$ or, equivalently, $0\not \in F$. If
$X\subseteq A$, the filter $F_X$ generated by $X$ is the minimum
filter containing $X$. It is well know that $F_X = \{x\in A: \exists
x_1 \cdots x_n \in X \hspace{0.2cm} with \hspace{0.2cm} x_1 \land
\cdots \land x_n \leq x \}$. Each filter $F$ in $A$ determines
univocally a congruence in which the equivalence classes are given
by $[x] = \{y \in A: \neg x \lor y \in F \hspace{0.2cm} and
\hspace{0.2cm} x \lor \neg y \in F \}$. In this case the quotient
set $A/_{\sim}$, noted as $A/F$, is a Boolean algebra and the
natural application $x \mapsto [x]$ is a Boolean homomorphism form
$A$ to $A/F$.  A proper filter $F$ is {\it maximal} iff the quotient
algebra $A/F$ is isomorphic to $\bf 2$. It is well known that each
proper filter can be extended to a maximal one. A very important
property associated with maximal filters is the following: suppose
that $x \not \leq y$. Then there exists a maximal filter $F$ in $A$
such that $x\in F$ and $y \not \in F$. We will refer to this result
as the {\it maximal filter theorem}.

\section{CONTEXTUALITY AND MODALITY IN \\ QUANTUM SYSTEMS}

In the usual terms of quantum logic \cite{ByvN, Jauch68}, a property
of a system is related to a subspace of the Hilbert space ${\mathcal
H}$ of its (pure) states or, analogously, to the projector operator
onto that subspace. A physical magnitude ${\mathcal M}$ is
represented by an operator $\bf M$ acting over the state space. For
bounded self-adjoint operators, conditions for the existence of the
spectral decomposition ${\bf M}=\sum_{i} a_i {\bf P}_i=\sum_{i} a_i
|a_i><a_i|$ are satisfied. The real numbers $a_i$ are related to the
outcomes of measurements of the magnitude ${\mathcal M}$ and
projectors $|a_i><a_i|$ to the mentioned properties.  Thus, the
physical properties of the system are organized in the lattice of
closed subspaces ${\mathcal L}({\mathcal H})$. Moreover, each
self-adjoint operator $\bf M$  has associated a Boolean sublattice
$W_{\bf{M}}$ of $L({\mathcal H})$ which we will refer to as the
spectral algebra of the operator $\bf M$.

Assigning values to a physical quantity ${\cal M}$ is equivalent to
establishing a Boolean homomorphism $v: W_{\bf{M}} \rightarrow {\bf
2}$. Thus, we can say that it makes sense to use the ``classical
discourse'' ---this is, the classical logical laws are valid---
within the context given by ${\mathcal M}$.

One may define a {\it global valuation} of the physical magnitudes
over ${\mathcal L}({\mathcal H})$ as a family of Boolean
homomorphisms $(v_i: W_i \rightarrow {\bf 2})_{i\in I}$ such that
$v_i\mid W_i \cap W_j = v_j\mid W_i \cap W_j$ for each $i,j \in I$,
being $(W_i)_{i\in I}$ the family of Boolean sublattices of
${\mathcal L}({\mathcal H})$. This global valuation would give the
values of all magnitudes at the same time maintaining a {\it
compatibility condition} in the sense that whenever two magnitudes
shear one or more projectors, the values assigned to those
projectors are the same from every context. As we have proved in
\cite{DF}, the KS theorem in the algebraic terms of the previous
definition rules out this possibility:

\begin{theo}\label{CS3}
If $\mathcal{H}$ is a Hilbert space such that $dim({\mathcal{H}}) >
2$, then a global valuation over ${\mathcal L}({\mathcal H})$ is not
possible.\qed
\end{theo}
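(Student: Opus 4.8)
The plan is to reduce the statement to the classical Kochen-Specker theorem by showing that a global valuation, as defined via the compatibility condition on the family of Boolean sublattices, would induce exactly the kind of total valuation on $\mathcal{L}(\mathcal{H})$ that KS forbids. So first I would make precise the bridge between the algebraic notion of global valuation and the usual formulation of KS. Recall that each self-adjoint operator $\mathbf{M}$ carries its spectral algebra $W_{\mathbf{M}}$, and any Boolean homomorphism $v_i\colon W_i\to\mathbf{2}$ assigns to every projector in $W_i$ a value in $\{0,1\}$ respecting meets, joins and complements. The key observation is that the projectors appearing across all the $W_i$ together exhaust the projectors of $\mathcal{L}(\mathcal{H})$, so a compatible family $(v_i)_{i\in I}$ glues into a single well-defined function $v\colon \mathcal{L}(\mathcal{H})\to\mathbf{2}$.

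The heart of the argument is to verify that this glued $v$ is a genuine KS-valuation, meaning it is faithful to the orthomodular structure in the following local sense: on any maximal Boolean subalgebra (equivalently, on any orthonormal basis of $\mathcal{H}$), $v$ assigns the value $1$ to exactly one atom and $0$ to all others, and more generally sends orthogonal families of projectors summing to the identity to a family of values summing to $1$. I would derive this directly from the compatibility condition: the value of each projector is independent of which context $W_i$ one computes it in, precisely because $v_i$ and $v_j$ agree on $W_i\cap W_j$. Thus the two structural constraints of a KS coloring---(i) exactly one member of each resolution of the identity receives the value $1$, and (ii) the assignment is noncontextual, i.e. a projector's value does not depend on the commuting set in which it is embedded---are both supplied, respectively, by the homomorphism property of each $v_i$ and by the global compatibility requirement.

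With the reduction in place, I would invoke the Kochen-Specker theorem in its projector-coloring form (the result being cited from \cite{KS, DF}), which asserts that for $\dim(\mathcal{H})>2$ no such noncontextual $\{0,1\}$-assignment on the projectors of $\mathcal{H}$ exists. The contradiction with the existence of $v$ completes the proof. The dimension hypothesis enters exactly here, since in dimension two every pair of compatible observables can be jointly diagonalized and a consistent assignment is easily built, so the obstruction is genuinely a feature of $\dim(\mathcal{H})>2$.

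The main obstacle I anticipate is the gluing step rather than the invocation of KS itself: one must check that the function $v$ obtained by patching the $v_i$ is single-valued and well-defined on overlaps, and that it really respects the lattice operations whenever the relevant elements happen to lie in a common Boolean context. This is where the compatibility condition $v_i\mid W_i\cap W_j = v_j\mid W_i\cap W_j$ does the essential work, and I would want to state carefully that two commuting projectors always lie together in some spectral algebra $W_k$ so that $v$'s behaviour on them is governed by an honest Boolean homomorphism. Once this coherence is established, the passage to the classical KS statement is immediate, and the theorem follows.
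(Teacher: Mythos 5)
Your proposal is correct and takes essentially the same route as the paper: the theorem is stated there without an in-text proof, being quoted from \cite{DF}, where it is obtained by exactly the reduction you describe---gluing the compatible family $(v_i\colon W_i\to{\bf 2})_{i\in I}$ into a single noncontextual $\{0,1\}$-assignment on the projectors of $\mathcal{L}(\mathcal{H})$ and invoking the classical Kochen--Specker theorem. The well-definedness of the glued valuation, which you rightly identify as the only substantive step, is handled precisely by the compatibility condition $v_i\mid W_i\cap W_j = v_j\mid W_i\cap W_j$, as you indicate.
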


This impossibility to assign values to the properties at the same
time satisfying compatibility conditions is a weighty obstacle for
the interpretation of the formalism.\\

We have introduced elsewhere \cite{DFR1, DFR2} a general modal
scheme which extends the expressive power of the orthomodular
structure to provide a rigorous framework for the Born rule and
mainly, to discuss the restrictions posed by the KS theorem to {\it
possible properties}. We recall here some notions that will be
useful in our development.

First, we enriched the orthomodular structure with a modal operator
taking into account the following considerations:

\begin{enumerate}

\item
Propositions about the properties of the physical system are
interpreted in the orthomodular lattice of closed subspaces of
$\mathcal{H}$. Thus, we retain this structure in our extension.

\item
Given a proposition about the system, it is possible to define a
context from which one can predicate with certainty about it
together with a set of propositions that are compatible with it and,
at the same time, predicate probabilities about the other ones (Born
rule). In other words, one may predicate truth or falsity of all
possibilities at the same time, i.e. possibilities allow an
interpretation in a Boolean algebra. In rigorous terms, for each
proposition $P$, if we refer with $\Diamond P$ to the possibility of
$P$, then $\Diamond P$ will be a central element of a orthomodular
structure.

\item
If $P$ is a proposition about the system and $P$ occurs, then it
is trivially possible that $P$ occurs. This is expressed as $P \leq
\Diamond P$. \hspace{0.2cm}

\item
Assuming an actual property and a complete set of properties that
are compatible with it determines a context in which the classical
discourse holds. Classical consequences that are compatible with it,
for example probability assignments to the actuality of other
propositions, shear the classical frame. These consequences are the
same ones as those which would be obtained by considering the
original actual property as a possible one. This is interpreted in
the following way: if $P$ is a property of the system, $\Diamond P$
is the smallest central element greater than $P$.

\end{enumerate}

From consideration 1, it follows that the original orthomodular
structure is maintained. The other considerations are satisfied if
we consider a modal operator $\Diamond$ over an orthomodular lattice
$\mathcal{L}$  defined as $$\Diamond a = Min \{z\in
Z({\mathcal{L}}): a\leq z \}$$ with $Z({\mathcal{L}})$ the center of
$\mathcal{L}$. When this minimum exists for each $a\in \mathcal{L}$
we say that $\mathcal{L}$ is a {\it Boolean saturated orthomodular
lattice}. On each Boolean saturated orthomodular lattice  we can
define the necessity operator as a unary operation $\Box$ given by
$\Box x = \neg \Diamond \neg x$. We have shown that this enriched
orthomodular structure can be axiomatized by equations conforming a
variety denoted by ${\cal OML}^\Diamond$ \cite{DFR1}. More
precisely, each element of ${\cal OML}^\Diamond$ is an algebra $
\langle {\mathcal{L}}, \land, \lor, \neg, \Box, 0, 1 \rangle$ of
type $ \langle 2, 2, 1, 1, 0, 0 \rangle$ satisfying the following
equations:

\begin{enumerate}

\item[S1]
$\Box x \leq x$ \hspace{3cm} S5 \hspace{0.2cm} $y = (y\land \Box x) \lor (y \land \neg \Box x)$

\item[S2]
$\Box 1 = 1$  \hspace{3.1 cm} S6 \hspace{0.2cm} $\Box (x \lor \Box y ) = \Box x \lor \Box y $

\item[S3]
$\Box \Box x = \Box x$  \hspace{2.5cm} S7 \hspace{0.2cm} $\Box (\neg x \lor (y \land x)) \leq \neg \Box x \lor \Box y $

\item[S4]
$\Box(x \land y) = \Box(x) \land \Box(y)$

\end{enumerate}

Orthomodular complete lattices are examples of Boolean saturated
orthomodular lattices and we can embed each orthomodular lattice
$\mathcal{L}$ in an element $\mathcal{L}^{\Diamond} \in  {\cal
OML}^\Diamond$. In general, $\mathcal{L}^{\Diamond}$ is referred as
a {\it modal extension of $\mathcal{L}$}. In this case we may see
the lattice $\mathcal{L}$ as a subset of $\mathcal{L}^{\Diamond}$
(see \cite{DFR1}).

\begin{definition}
{\rm Let $\mathcal{L}$ be an orthomodular lattice and
$\mathcal{L}^\Diamond \in {\cal OML}^\Diamond$ be a modal extension
of $\mathcal{L}$. We define the {\it possibility space} of
$\mathcal{L}$ in ${\mathcal{L}}^\Diamond$ as $$\Diamond
{\mathcal{L}} = \langle \{\Diamond p : p \in \mathcal{L} \}
\rangle_{\mathcal{L}^\Diamond}$$ }
\end{definition}

\noindent The {\it possibility space} represents the modal content
added to the discourse about properties of the system.

\begin{prop}\label{POSSPACE}{\rm \cite[Proposition 14]{DFR1}}
Let $\mathcal{L}$ be an orthomodular lattice, $W$  a Boolean
sublattice of $\mathcal{L}$ and $\mathcal{L}^\Diamond \in {\cal
OML}^\Diamond$ a modal extension of $\mathcal{L}$. Then $\langle W
\cup \Diamond \mathcal{L} \rangle_{\mathcal{L}^\Diamond}$ is a
Boolean sublattice of $\mathcal{L}^\Diamond$. In particular
$\Diamond \mathcal{L}$ is a Boolean sublattice of
$Z(\mathcal{L}^\Diamond)$. \qed
\end{prop}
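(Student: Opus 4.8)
The plan is to reduce the whole statement to the behaviour of \emph{commuting} elements in an orthomodular lattice, exploiting that every $\Diamond p$ is central. Recall that two elements $a,b$ of an orthomodular lattice commute when $a = (a\land b)\lor(a\land\neg b)$, and that this holds automatically whenever one of them is central: if $z$ is central then $(a,\neg a,z)D$ gives $z = (z\land a)\lor(z\land\neg a)$, so $z$ commutes with every element. I would also lean on two standard facts from \cite{KAL, MM}: the set of elements commuting with a fixed collection is closed under $\land,\lor,\neg$; and the Foulis--Holland property, which guarantees that the sub-ortholattice generated by a family of pairwise commuting elements is distributive, hence Boolean.

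First I would settle the ``in particular'' clause, which also fixes the objects used afterwards. By the very definition $\Diamond p = Min\{z\in Z(\mathcal{L}^\Diamond): p\leq z\}$, each $\Diamond p$ belongs to $Z(\mathcal{L}^\Diamond)$. Since $Z(\mathcal{L}^\Diamond)$ is a Boolean sublattice of $\mathcal{L}^\Diamond$ by \cite[Theorem 4.15]{MM}, it is closed under $\land,\lor$ and $\neg$; therefore the sub-ortholattice $\Diamond\mathcal{L}=\langle\{\Diamond p:p\in\mathcal{L}\}\rangle_{\mathcal{L}^\Diamond}$ generated by these central elements is contained in $Z(\mathcal{L}^\Diamond)$, and being a subalgebra of a Boolean algebra it is itself Boolean. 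This proves that $\Diamond\mathcal{L}$ is a Boolean sublattice of $Z(\mathcal{L}^\Diamond)$.

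For the main assertion I would show that the generating set $G = W\cup\Diamond\mathcal{L}$ consists of pairwise commuting elements. Any two elements of $W$ commute because $W$ is a distributive sub-ortholattice, so the commutation identity already holds inside $W$ and hence in $\mathcal{L}^\Diamond$. Any element of $\Diamond\mathcal{L}$ is central, by the previous paragraph, so it commutes with every element of $\mathcal{L}^\Diamond$; in particular it commutes with each element of $W$ and with each other element of $\Diamond\mathcal{L}$. Thus all pairs drawn from $G$ commute. Invoking the result that a pairwise commuting family generates a distributive sub-ortholattice, I conclude that $\langle W\cup\Diamond\mathcal{L}\rangle_{\mathcal{L}^\Diamond}$ is distributive and closed under orthocomplementation, i.e. a Boolean sublattice of $\mathcal{L}^\Diamond$.

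The routine part---that commutation of all generators propagates to the meets, joins and orthocomplements appearing in arbitrary lattice words---is exactly what the commutation lemmas of \cite{KAL, MM} provide. The delicate point, and the real engine of the proof, is the Foulis--Holland step that turns ``pairwise commuting generators'' into ``distributive, hence Boolean, generated subalgebra''; it is here that the centrality of each $\Diamond p$ is indispensable, since it is what makes the possibility elements commute with the non-central elements of $W$. I would therefore verify the pairwise commutation of $G$ in detail and regard the passage to the generated subalgebra as the principal obstacle.
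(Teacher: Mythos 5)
Your proof is correct. Note that the paper itself states this proposition \emph{without} proof, importing it from \cite[Proposition 14]{DFR1}, so there is no internal argument to compare yours against; the route you take is the natural and standard one, and it is sound. Both pillars hold exactly as you argue: $\Diamond\mathcal{L}\subseteq Z(\mathcal{L}^\Diamond)$ because each generator $\Diamond p$ is central by definition and the center is a Boolean subalgebra by \cite[Theorem 4.15]{MM}; and since any two elements of $W$ commute (by distributivity of $W$, with the commutation identity transferred along the ${\cal OML}$-embedding $\mathcal{L}\hookrightarrow\mathcal{L}^\Diamond$, which preserves $\land,\lor,\neg$) while every element of $\Diamond\mathcal{L}$, being central, commutes with everything, the set $W\cup\Diamond\mathcal{L}$ is pairwise commuting, and the Foulis--Holland theorem together with the commutation lemmas of \cite{KAL,MM} yields that the subalgebra it generates is Boolean.
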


Now, we develop the algebraic counterpart of the classical notion of
{\it consequence} which will be useful when formalizing the concept
of possibility in MWI. As will become clear below, Proposition
\ref{POSSPACE} allows to establish a deep relation between this
concept and the possibility space.

\begin{definition}
{\rm Let $\mathcal{L}$ be an orthomodular lattice, $p\in
\mathcal{L}$ and $\mathcal{L}^\Diamond \in {\cal OML}^\Diamond$ a
modal extension of $\mathcal{L}$. Then $x \in \Diamond \mathcal{L}$
is said to be a {\it classical consequence} of $p$ iff for each
Boolean sublattice $W$ in $\mathcal{L}$ (with $p\in W$) and each
Boolean valuation $v:W \rightarrow {\bf 2}$, $v(x) = 1$ whenever
$v(p)=1$. We denote by $Cons_{\mathcal{L}^\Diamond}(p)$ the set of
classical consequences of $\mathcal{L}$. }
\end{definition}

\begin{prop}\label{CLASCONS}
Let $\mathcal{L}$ be an orthomodular lattice, $p\in \mathcal{L}$ and
$\mathcal{L}^\Diamond \in {\cal OML}^\Diamond$ a modal extension of
$\mathcal{L}$. Then we have that
$$Cons_{{\mathcal{L}}^\Diamond}(p) = \{x\in \Diamond {\mathcal{L}}: p \leq x \} = \{x\in
\Diamond {\mathcal{L}}: \Diamond p \leq x \} $$
\end{prop}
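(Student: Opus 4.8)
The plan is to prove the two equalities separately, beginning with the rightmost one, which is purely order-theoretic. The modal axioms give $p \leq \Diamond p$ (from $\Box \neg x \leq \neg x$ one gets $x \leq \neg\Box\neg x = \Diamond x$), and $\Diamond p$ is by definition the least central element of $\mathcal{L}^\Diamond$ above $p$, i.e. $\Diamond p = \mathrm{Min}\{z \in Z(\mathcal{L}^\Diamond) : p \leq z\}$. Recalling from Proposition \ref{POSSPACE} that $\Diamond\mathcal{L} \subseteq Z(\mathcal{L}^\Diamond)$, any $x \in \Diamond\mathcal{L}$ with $p \leq x$ is then a central element lying above $p$, so $\Diamond p \leq x$ by minimality; conversely $\Diamond p \leq x$ forces $p \leq \Diamond p \leq x$. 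This settles $\{x \in \Diamond\mathcal{L} : p \leq x\} = \{x \in \Diamond\mathcal{L} : \Diamond p \leq x\}$ and reduces the problem to identifying $Cons_{\mathcal{L}^\Diamond}(p)$ with $S := \{x \in \Diamond\mathcal{L} : p \leq x\}$.

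First I would prove the inclusion $S \subseteq Cons_{\mathcal{L}^\Diamond}(p)$. Given $x \in \Diamond\mathcal{L}$ with $p \leq x$, fix any Boolean sublattice $W$ of $\mathcal{L}$ with $p \in W$; by Proposition \ref{POSSPACE} the relevant valuation is a homomorphism on the Boolean algebra $\langle W \cup \Diamond\mathcal{L}\rangle_{\mathcal{L}^\Diamond}$, which contains both $p$ and $x$, so that $v(x)$ is meaningful. Since $v$ is a Boolean homomorphism it is monotone, whence $v(p) = 1$ together with $p \leq x$ yields $v(x) = 1$. Thus $x$ is a classical consequence of $p$.

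The substantive direction is $Cons_{\mathcal{L}^\Diamond}(p) \subseteq S$, which I would establish by contraposition. Suppose $x \in \Diamond\mathcal{L}$ but $p \not\leq x$. Take the four-element Boolean sublattice $W = \{0, p, \neg p, 1\}$ of $\mathcal{L}$, so that $p \in W$, and form $B = \langle W \cup \Diamond\mathcal{L}\rangle_{\mathcal{L}^\Diamond}$, a Boolean sublattice of $\mathcal{L}^\Diamond$ by Proposition \ref{POSSPACE} containing both $p$ and $x$. Since $p \not\leq x$ holds in $B$, the maximal filter theorem yields a maximal filter $F$ of $B$ with $p \in F$ and $x \notin F$. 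The quotient map onto $B/F \cong {\bf 2}$ is then a Boolean valuation $v$ with $v(p) = 1$ and $v(x) = 0$, whose restriction to $W$ exhibits a context witnessing that $x$ is not a classical consequence of $p$. This contradiction forces $p \leq x$, i.e. $x \in S$.

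I expect the main obstacle to be this contrapositive step, and in particular pinning down the precise sense in which the valuation of the definition acts on $x \in \Diamond\mathcal{L}$: the entire argument hinges on reading $v$ as a homomorphism on $\langle W \cup \Diamond\mathcal{L}\rangle_{\mathcal{L}^\Diamond}$ rather than on $W$ alone, which is exactly what Proposition \ref{POSSPACE} licenses, and on the separation of $p$ from $x$ supplied by the maximal filter theorem. The forward inclusion and the order-theoretic equality are routine by comparison.
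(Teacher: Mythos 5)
Your proposal is correct and follows essentially the same route as the paper's own proof: the order-theoretic identification of the two right-hand sets via centrality of $\Diamond\mathcal{L}$ and minimality of $\Diamond p$, the easy inclusion by monotonicity of Boolean homomorphisms, and the hard direction via the four-element subalgebra $W=\{0,p,\neg p,1\}$, Proposition \ref{POSSPACE}, and the maximal filter theorem applied in $\langle W \cup \Diamond\mathcal{L}\rangle_{\mathcal{L}^\Diamond}$ to separate $p$ from $x$. The only differences are presentational (you argue by contraposition where the paper argues by contradiction, and you make explicit the reading of the valuation's domain that the paper leaves implicit), so nothing further is needed.
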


\begin{proof}
By definition of $\Diamond$ it is clear that  $\{x\in \Diamond
{\mathcal{L}}: p \leq x \} = \{x\in \Diamond {\mathcal{L}}: \Diamond
p \leq x \}$ and the inclusion $\{x\in \Diamond {\mathcal{L}}:
\Diamond p \leq x \} \subseteq Cons_{{\mathcal{L}}^\Diamond}(p)$ is
trivial. Let $x\in Cons_{\mathcal{L}^\Diamond}(p)$. Assume that $p
\not \leq x$. Consider the Boolean sub algebra of $\mathcal{L}$
given by $W = \{p, \neg p, 0,1 \}$. By Proposition \ref{POSSPACE},
$W^\Diamond = \langle W \cup \Diamond \mathcal{L}
\rangle_{\mathcal{L}^\Diamond}$ is a Boolean sublattice of
$\mathcal{L}^\Diamond$. By the maximal filter theorem, there exists
a maximal filter $F$ in $W^\Diamond$ such that $p\in F$ and $x \not
\in F$. If we consider the quotient Boolean algebra $W^\Diamond /F$
and the natural Boolean homomorphism $f: W^\Diamond \rightarrow
W^\Diamond/F = {\bf 2}$, then $f(p) = 1$ and $f(x) = 0$, which is a
contradiction. \qed
\end{proof}

Let $\mathcal{L}$ be an orthomodular lattice, $(W_i)_{i \in I}$ the
family of Boolean sublattices of $\mathcal{L}$ and
$\mathcal{L}^\Diamond$ a modal extension of $\mathcal{L}$. If $f:
\Diamond \mathcal{L} \rightarrow {\bf 2}$ is a Boolean homomorphism,
an {\it actualization compatible} with  $f$ is a global valuation
$(v_i: W_i \rightarrow {\bf 2})_{i\in I}$ such that $v_i \mid W_i \cap \Diamond {\cal L}  = f \mid  W_i \cap \Diamond {\cal L} $ for each $i\in I$. Compatible actualizations represent the passage
from possibility to actuality.

\begin{theo}\label{MKS}{\rm \cite[Theorem 19]{DFR1}}
Let $\mathcal{L}$ be an orthomodular lattice. Then $\mathcal{L}$
admits a global valuation iff for each possibility space there
exists a Boolean homomorphism  $f: \Diamond \mathcal{L} \rightarrow
{\bf 2}$ that admits  a compatible actualization. \qed
\end{theo}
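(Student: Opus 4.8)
The plan is to prove the two implications separately, observing that the reverse implication is essentially definitional while the forward implication carries the real content. Throughout, $(W_i)_{i\in I}$ denotes the family of \emph{all} Boolean sublattices of $\mathcal{L}$, and I fix a modal extension $\mathcal{L}^\Diamond$ with its possibility space $\Diamond\mathcal{L}\subseteq Z(\mathcal{L}^\Diamond)$.

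For the direction ($\Leftarrow$), I would use that the excerpt guarantees that every orthomodular lattice has at least one modal extension, hence at least one possibility space. Applying the hypothesis to that possibility space yields a Boolean homomorphism $f:\Diamond\mathcal{L}\rightarrow{\bf 2}$ admitting a compatible actualization; but an actualization compatible with $f$ is, \emph{by definition}, a global valuation of $\mathcal{L}$. Thus $\mathcal{L}$ admits a global valuation, and no computation is required.

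For the substantive direction ($\Rightarrow$), assume a global valuation $(v_i:W_i\rightarrow{\bf 2})_{i\in I}$ is given and fix an arbitrary possibility space $\Diamond\mathcal{L}$. The candidate compatible actualization will be the given global valuation itself; the task is to build $f$ on $\Diamond\mathcal{L}$ consistent with it. First I would isolate the overlap $W_0=\mathcal{L}\cap\Diamond\mathcal{L}$. Since the embedding $\mathcal{L}\hookrightarrow\mathcal{L}^\Diamond$ preserves $\vee,\wedge,\neg$, and since $\Diamond\mathcal{L}$ lies in the center (so each of its elements is complemented in $\mathcal{L}^\Diamond$ with complement $\neg x$, which is then also its Boolean complement inside $\Diamond\mathcal{L}$), the set $W_0$ is closed under the Boolean operations of $\Diamond\mathcal{L}$ and contains $0,1$. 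Hence $W_0$ is itself a Boolean sublattice of $\mathcal{L}$, so $W_0=W_{i_0}$ for some index $i_0$, and I may set $f_0:=v_{i_0}:W_0\rightarrow{\bf 2}$, a Boolean homomorphism. I then extend $f_0$ to all of $\Diamond\mathcal{L}$: the set $f_0^{-1}(1)$ is an ultrafilter of $W_0$, the filter it generates in $\Diamond\mathcal{L}$ is proper (it omits $0$, as $0\notin f_0^{-1}(1)$), and by the maximal filter theorem it extends to a maximal filter $U$. Because $U$ is proper and $f_0^{-1}(1)$ is already maximal within $W_0$, one checks $U\cap W_0=f_0^{-1}(1)$, so the homomorphism $f:\Diamond\mathcal{L}\rightarrow{\bf 2}$ with $f^{-1}(1)=U$ extends $f_0$.

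It remains to verify compatibility. For each $i$ one has $W_i\cap\Diamond\mathcal{L}=W_i\cap W_0\subseteq W_0$, so on this set $f=f_0=v_{i_0}$; and the compatibility clause of the global valuation gives $v_{i_0}=v_i$ on $W_i\cap W_0$. Hence $f\mid W_i\cap\Diamond\mathcal{L}=v_i\mid W_i\cap\Diamond\mathcal{L}$ for every $i$, which is precisely the statement that $(v_i)_{i\in I}$ is an actualization compatible with $f$. As the possibility space was arbitrary, this establishes the forward direction. I expect the main obstacle to be the middle step: confirming that $W_0=\mathcal{L}\cap\Diamond\mathcal{L}$ really is a Boolean sublattice of $\mathcal{L}$ (so that it is one of the $W_i$ and carries a canonical value $v_{i_0}$), and that the ultrafilter extension restricts back to $f_0$. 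Once this is in place, Proposition \ref{POSSPACE} and the maximal filter theorem make the remainder routine.
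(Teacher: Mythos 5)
Your proposal is correct, but a caveat on the comparison: the paper itself contains no proof of Theorem \ref{MKS} --- it is imported verbatim from \cite[Theorem 19]{DFR1}, with only the statement reproduced --- so the assessment below is of your argument on its own terms, using only the machinery available in Section 2 and Proposition \ref{POSSPACE}.

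Both directions check out. The backward direction is, as you say, definitional once one notes (as the paper guarantees) that every orthomodular lattice has at least one modal extension, hence at least one possibility space; a compatible actualization is by definition a global valuation. For the forward direction, the steps you flag as the potential obstacles all go through: (i) $W_0=\mathcal{L}\cap\Diamond\mathcal{L}$ is closed under the operations of $\mathcal{L}^\Diamond$ (both $\mathcal{L}$ and $\Diamond\mathcal{L}$ are), and its elements lie in $Z(\mathcal{L}^\Diamond)$ by Proposition \ref{POSSPACE}, hence satisfy the distributive law among themselves; so $W_0$ is a Boolean sublattice of $\mathcal{L}$ and equals some $W_{i_0}$, since the family $(W_i)_{i\in I}$ comprises \emph{all} Boolean sublattices. (ii) One point you should state explicitly: the properness of the filter generated by $f_0^{-1}(1)$ in $\Diamond\mathcal{L}$ is not just because $0\notin f_0^{-1}(1)$, but because $f_0^{-1}(1)$ is closed under finite meets and the meet of $W_0$ coincides with that of $\Diamond\mathcal{L}$ (both computed in $\mathcal{L}^\Diamond$), so every finite meet of generators already lies in $f_0^{-1}(1)$ and is therefore nonzero. (iii) The identity $U\cap W_0=f_0^{-1}(1)$ follows exactly as you indicate: if $x\in U\cap W_0$ with $f_0(x)=0$, then $\neg x\in f_0^{-1}(1)\subseteq U$, forcing $0\in U$, a contradiction with properness. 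The compatibility check is then immediate from the compatibility clause of the given global valuation applied to the pair $(i,i_0)$, using $W_i\cap\Diamond\mathcal{L}=W_i\cap W_0$. One terminological slip: what you invoke to extend the proper filter is the statement ``each proper filter can be extended to a maximal one,'' which the paper records separately; its ``maximal filter theorem'' is the separation statement (if $x\not\leq y$ there is a maximal filter containing $x$ and omitting $y$). This does not affect the argument.
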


The addition of modalities to the discourse about the properties
of a quantum system enlarges its expressive power. At first sight
it may be thought that this could help to circumvent
contextuality, allowing to refer to physical properties belonging
to the system in an objective way that resembles the classical
picture. Since the possibility space is a Boolean algebra, there
exists a Boolean valuation of the possible properties. But in view
of the last theorem, a global actualization that would correspond
to a family of compatible  valuations is prohibited. Thus, the
theorem states that {\it the contextual character is maintained
even when the discourse is enriched with modalities.}

\section{AN ALGEBRAIC FRAME FOR MANY \\ WORLDS}

In the MWI, all possibilities encoded in the wave function take
place, but in different worlds. More precisely: let ${\cal M}$ be a
physical magnitude represented by an operator ${\bf M}$ with
spectral decomposition ${\bf M}=\sum_{i} a_i {\bf P}_i$. If a
measurement of ${\bf M}$ is performed and $a_1$ occurs, then in
another world $a_2$ occurs, and in some other world $a_3$ occurs,
etc. Let us now see  how we can introduce our modal algebraic frame
for MWI.

Let ${\mathcal{H}}$ be a Hilbert space and suppose that ${\bf M}$
has associated a Boolean sublattice $W_{\bf M}$ of
$\mathcal{L}({\mathcal{H}})$. The family $({\bf P}_i)_i$ is
identified as elements of $W_{\bf M}$. If a measurement is performed
and its result is $a_i$, this means that we can establish a Boolean
homomorphism $$v_i:W_{\bf M} \rightarrow {\bf 2} \hspace{0.7cm}s.t.
\hspace{0.2cm} v_i({\bf P}_i) = 1 $$

\subsection{${\cal OML}^\Diamond$-CONSEQUENCES}
In a possible world where $v_i({\bf P}_i) = 1$ we will have
classical consequences. We can take an arbitrary modal extension
$\mathcal{L}^{\Diamond}$ of $\mathcal{L}({ \mathcal{H}})$ and
consider the set $Cons_{\mathcal{L}^\Diamond}({\bf P}_i)$. The modal
extension {\it does not depend} on the valuation over the family
$({\bf P}_i)_i$. Thus, it is clear that the modal extension is
independent of any possible world. Modal extensions are simple
algebraic extensions of an orthomodular structure. By Proposition
\ref{CLASCONS} we have that $Cons_{{\mathcal{L}}^\Diamond}({\bf
P}_i) = \{x\in \Diamond {\mathcal{L}}({\mathcal{H}}): \Diamond {\bf
P}_i \leq x \} $. Thus, for any arbitrary modal extension
$\mathcal{L}^{\Diamond}$ of $\mathcal{L}({\mathcal{H}})$ in terms of
classical consequences, the classical consequences of $v_i({\bf
P}_i) = 1$ are exactly the same ones as $\Diamond {\bf P}_i$
(independently of any possible splitting). In terms of classical
consequences which refer to a property ${\bf P}_i$, it is the same
to consider the classical consequences in the possible world where
$v_i({\bf P}_i) = 1$, than to study the classical consequences of
$\Diamond {\bf P}_i$ before the splitting.

MWI {\it maintains that in each respective i-world, $v_i({\bf P}_i)
= 1$ for each $i$}. Thus, a family of valuations $(v_i({\bf P}_i) =
1)_i$ may be simultaneously considered, each member being realized
in each different $i$-world. From an algebraic perspective, this
would be equivalent to have a family of pairs $\langle
{\mathcal{L}}({\mathcal{H}}), v_i({\bf P}_i) = 1 \rangle_i$, each
pair being the orthomodular structure ${\mathcal{L}}({\mathcal{H}})$
with a distinguished Boolean valuation $v_i$ over a spectral
sub-algebra containing ${\bf P}_i$ such that $v_i({\bf P}_i) = 1$.
In what follows, we will show that the  ${\cal OML}^\Diamond$
structure is able to capture  this fact in terms of classical
consequences. For this purpose, the following proposition is needed.

\begin{prop}\label{COMPL2}
Let ${\mathcal{H}}$ be Hilbert space such that $dim({\mathcal{H}})
> 2$ and $a,b$ be a two distinct atoms in $\mathcal{L}({\mathcal{H}})$.
If we consider a modal extension $\mathcal{L}^{\diamond}$ of
$\mathcal{L}({\mathcal{H}})$, then $\Diamond(a) = \Diamond(b)$.
\end{prop}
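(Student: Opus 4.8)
The plan is to reduce the equality $\Diamond a = \Diamond b$ to a statement about which central elements of $\mathcal{L}^\Diamond$ lie above each atom. Since $\Diamond a = \mathrm{Min}\{z \in Z(\mathcal{L}^\Diamond) : a \leq z\}$ by definition, if I can show that the two sets $\{z \in Z(\mathcal{L}^\Diamond) : a \leq z\}$ and $\{z \in Z(\mathcal{L}^\Diamond) : b \leq z\}$ coincide, then their minima, namely $\Diamond a$ and $\Diamond b$, are automatically equal. By the symmetry of $a$ and $b$ it then suffices to prove the single implication: for every central $z$, if $a \leq z$ then $b \leq z$.

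For the geometric input I would work inside the plane $P = a \vee b$. Since $a \neq b$ are atoms (lines) of $\mathcal{L}(\mathcal{H})$, the element $P$ is two-dimensional and therefore contains infinitely many lines, so I can fix a \emph{third} atom $c \leq P$ with $c \neq a$ and $c \neq b$ (this is the only place the dimension hypothesis is used, and in fact $dim(\mathcal{H}) \geq 2$ already suffices to produce $c$). The identities I will exploit are $a \vee b = a \vee c = P$ and $b \wedge c = 0$; all of these are computed in $\mathcal{L}(\mathcal{H})$ and remain valid in $\mathcal{L}^\Diamond$ because the embedding $\mathcal{L}(\mathcal{H}) \hookrightarrow \mathcal{L}^\Diamond$ preserves $\vee$ and $\wedge$.

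The key computation applies centrality twice. Fixing a central $z$ with $a \leq z$, I have $a \wedge \neg z = 0$; since $Z(\mathcal{L}^\Diamond)$ is a Boolean sublattice, $\neg z$ is central as well, so both $z$ and $\neg z$ distribute over finite joins and meets. Distributing $\neg z$ over $a \vee b$ and over $a \vee c$ and using $a \wedge \neg z = 0$ yields $b \wedge \neg z = (a \vee b) \wedge \neg z = P \wedge \neg z = (a \vee c) \wedge \neg z = c \wedge \neg z$. Writing $w$ for this common element, I get $w \leq b$ and $w \leq c$, hence $w \leq b \wedge c = 0$. Thus $P \wedge \neg z = 0$, and the central decomposition $P = (P \wedge z) \vee (P \wedge \neg z)$ collapses to $P = P \wedge z \leq z$, so $b \leq P \leq z$, which closes the implication and hence the proof.

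The step I expect to be the genuine obstacle is not the algebra but ensuring the argument never secretly uses that $a, b, c$ are \emph{atoms of} $\mathcal{L}^\Diamond$: a modal extension may insert new elements below a line, so $b \wedge z$ need not belong to $\{0, b\}$ and the seductive case split ``$b \wedge z = 0$ or $b = b \wedge z$'' is simply unavailable. The device that circumvents this is precisely the detour through the third atom $c$: the only facts about $a, b, c$ that enter the calculation are the lattice identities $a \vee b = a \vee c = P$ and $b \wedge c = 0$ inherited from $\mathcal{L}(\mathcal{H})$, together with the centrality of $z$ in $\mathcal{L}^\Diamond$. Once the reduction to order filters and the choice of $c$ are in place, every remaining step is forced.
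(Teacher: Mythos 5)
Your proof is correct, but it takes a genuinely different route from the paper's. The paper works \emph{outward}: it produces a coatom $c$ (a hyperplane) that is a \emph{common complement} of the two atoms, so that $a\wedge c = b\wedge c = 0$ and $a\vee c = b\vee c = 1$, and then proves $\Diamond a = \Diamond(\neg c) = \Diamond b$ by a pair of inequalities exploiting the centrality of $\Box c = \neg\Diamond\neg c$ and of $\neg\Diamond a$. You work \emph{inward}: you reduce $\Diamond a = \Diamond b$ to the equality of the sets $\{z\in Z(\mathcal{L}^\Diamond): a\leq z\}$ and $\{z\in Z(\mathcal{L}^\Diamond): b\leq z\}$, pick a third atom $c$ inside the plane $P = a\vee b$, and use finite distributivity of central elements (applied to $z$ and $\neg z$) to show that any central $z\geq a$ in fact satisfies $z \geq P \geq b$. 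Both arguments rest on the same two pillars --- the ${\cal OML}$-embedding $\mathcal{L}(\mathcal{H})\hookrightarrow\mathcal{L}^\Diamond$ preserving finite joins and meets, and the center of $\mathcal{L}^\Diamond$ being a Boolean sublattice closed under $\neg$ --- but the geometric inputs differ (hyperplane avoiding two lines versus third line in a plane), and your version buys two things. First, it establishes the slightly stronger fact that a central element above either atom lies above their join. Second, it sidesteps the existence argument for the common-complement coatom: the paper derives it by negating ``$a\leq c_i$ and $b\leq c_i$ for all $i$,'' which as written only yields a coatom missing \emph{at least one} of $a,b$ rather than both (the needed claim is of course true geometrically, but your existence claim --- a third line in a two-dimensional complex subspace --- is immediate). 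As you note, your argument also uses only $\dim(\mathcal{H})\geq 2$; the hypothesis $\dim(\mathcal{H})>2$ plays no essential role in this particular proposition.
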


\begin{proof}
We first note that there exists a coatom $c$ such that $c$ is not
comparable with $a$ and $b$. In fact, let $(c_i)_{i\in I}$ be the
family of coatoms of $\mathcal{L}({\mathcal{H}})$ and suppose that
$a\leq c_i$ and $b \leq c_i$ for each ${i\in I}$. We then get $a\lor
b \leq \bigcap_{i\in I} c_i = 0$, which is a contradiction. Since
$a$ and $b$ are atoms and $c$ is a coatom not comparable with $a$
and $b$ then $0 = a\land c = b\land c$ and $1 = a\lor c = b\lor c$.
Hence $c$ is a common complement of $a,b$. Since
$\mathcal{L}({\mathcal{H}})\hookrightarrow \mathcal{L}^{\Diamond}$
is an ${\cal OML}$-embeding, $c$ is a common complement of $a,b$ in
$\mathcal{L}^{\Diamond}$. We first note $\neg \Diamond \neg c = \Box
c \leq c$, then $a\land \neg \Diamond \neg c \leq a\land  c = 0$.
Since $\neg \Diamond \neg c$ is a central element, $a \leq \Diamond
\neg c$ and $\Diamond a \leq \Diamond \neg c$. Since $a\lor c = 1$
then $\neg a \land \neg c = 0$. Therefore $\neg c \land \neg
\Diamond a \leq \neg a \land \neg c = 0$. Since $\neg \Diamond a$ is
central element then $\neg c \leq \Diamond a$ and $\Diamond \neg c
\leq \Diamond a$. With the same argument we can prove that
$\Diamond(b) = \Diamond(\neg c)$.

\qed
\end{proof}

The following theorem is crucial in order  to relate MWI with
modality in terms of valuations and classical consequences.

\begin{theo}\label{MANY}
Let ${\mathcal{H}}$ be Hilbert space such that $dim({\mathcal{H}})
> 2$ and $(p_i)_{i \in I}$ be a family of elements of
$\mathcal{L}({\mathcal{H}})$. If we consider a modal extension
$\mathcal{L}({\mathcal{H}})\hookrightarrow \mathcal{L}^{\Diamond}$
then there exists a Boolean homomorphism $v: \Diamond \mathcal{L}
\rightarrow {\bf 2}$ such that $v(\Diamond p_i) = 1$ for each $i\in
I$.
\end{theo}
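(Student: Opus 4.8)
The plan is to build a single proper filter in the Boolean algebra $\Diamond\mathcal{L}$ that contains every $\Diamond p_i$, and then invoke the maximal filter theorem to extract the desired homomorphism. Recall from Proposition \ref{POSSPACE} that $\Diamond\mathcal{L}$ is a Boolean sublattice of $Z(\mathcal{L}^\Diamond)$, so filter-theoretic arguments apply to it directly. I would take for granted that the $p_i$ are nonzero, since $\Diamond p_i = \Diamond 0 = 0$ would make the conclusion impossible.

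The crucial point is that all the elements $\Diamond p_i$ admit a common nonzero lower bound. Using the atomisticity of $\mathcal{L}(\mathcal{H})$, I would choose for each $i$ an atom $a_i \leq p_i$. Since $\Diamond p_i$ is central and $a_i \leq p_i \leq \Diamond p_i$, the element $\Diamond p_i$ is a central upper bound of $a_i$, and as $\Diamond a_i$ is by definition the least such, monotonicity gives $\Diamond a_i \leq \Diamond p_i$. Now fix one atom $a_0$ and set $d = \Diamond a_0$. Because $dim(\mathcal{H}) > 2$, Proposition \ref{COMPL2} applies: the possibilities of any two distinct atoms coincide, so $\Diamond a_i = d$ for every $i$ (the case $a_i = a_0$ being trivial). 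Moreover $d \geq a_0 > 0$, hence $d \neq 0$. Combining these yields $d \leq \Diamond p_i$ for all $i \in I$.

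With this common floor in hand, consider the filter $F_0$ generated in $\Diamond\mathcal{L}$ by $\{\Diamond p_i : i \in I\}$. Every finite meet $\Diamond p_{i_1} \land \cdots \land \Diamond p_{i_n}$ is bounded below by $d$, hence is $\geq d > 0$ and in particular nonzero; therefore $0 \notin F_0$ and $F_0$ is proper. Extending $F_0$ to a maximal filter $F$ and passing to the quotient produces a Boolean homomorphism $v : \Diamond\mathcal{L} \rightarrow {\bf 2}$, namely the composite $\Diamond\mathcal{L} \rightarrow \Diamond\mathcal{L}/F \cong {\bf 2}$, with $v(\Diamond p_i) = 1$ for every $i$, as required.

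The main obstacle is the middle step: reducing the simultaneous satisfiability of the whole family to the single inequality $d \leq \Diamond p_i$. This is exactly where Proposition \ref{COMPL2} does the heavy lifting, since the collapse of all atoms to one possibility value is what pins every nonzero property's possibility above the common floor $d$. Once this lower bound is secured the filter extension is routine, and it is precisely this collapse --- absent for genuine global valuations, cf. Theorem \ref{MKS} --- that explains why MWI escapes the Kochen-Specker obstruction.
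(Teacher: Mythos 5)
Your proof is correct and follows essentially the same route as the paper's: choose an atom below each $p_i$, use Proposition \ref{COMPL2} to guarantee that every finite meet of the $\Diamond p_i$ is nonzero, conclude that the family generates a proper filter in the Boolean algebra $\Diamond\mathcal{L}$, and extend it to a maximal filter whose quotient map gives the desired homomorphism. Your explicit common lower bound $d = \Diamond a_0$ and your remark that the $p_i$ must be nonzero are merely more careful renderings of steps the paper leaves implicit (it tacitly assumes each $p_i$ admits an atom below it), not a different argument.
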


\begin{proof}
Since $\mathcal{L}({\mathcal{H}})$ is an atomic lattice, for each
$p_i$ there exists an atom $a_i$ such that $a_i \leq p_i$. Let $I_0$
be a finite subfamily of $I$. By Proposition \ref{COMPL2}, we have
that $0 < \bigwedge_{i\in I_0} \Diamond(a_i) \leq \bigwedge_{i\in
I_0} \Diamond(p_i)$. Therefore the family $(\Diamond p_i)_{\i \in
I}$ generates a proper filter $F$ in the Boolean algebra $\Diamond
\mathcal{L}$. Extending $F$ to a maximal filter $F_{\bf M}$, the
natural Boolean homomorphism $v: \Diamond \mathcal{L} \rightarrow
{\bf 2}$ satisfies that for each $i\in I$, $v(\Diamond p_i) = 1$.

\qed
\end{proof}

While MWI considers a family of pairs $\langle
{\mathcal{L}}({\mathcal{H}}), v_i({\bf P}_i) = 1 \rangle_i$ for each
possible i-world and the classical consequences of $v_i({\bf P}_i) =
1$ in the $i$-world, the ${\cal OML}^\Diamond$ structure, by
Proposition \ref{CLASCONS}, considers classical consequences of each
$v_i({\bf P}_i) = 1$ coexisting simultaneously in one and the same
structure, what is possible in view of Theorem \ref{MANY}. More
precisely, as a valuation $v: \Diamond \mathcal{L} \rightarrow {\bf
2}$ exists such that $v(\Diamond {\bf P}_i) = 1$ for each $i$, each
element $x\in \Diamond \mathcal{L}$ such that ${\bf P}_i \leq x$
necessarily satisfies $v(x)=1$.

\subsection{MANY WORLDS AND KOCHEN-SPECKER TYPE THEOREMS}

KS theorem does not impose conditions on both the  family of
valuations $v_i({\bf P}_i) = 1$, considered as a family of pairs
$\langle {\mathcal{L}}({\mathcal{H}}), v_i({\bf P}_i) = 1 \rangle_i$
in MWI nor on the  Boolean valuation $v: \Diamond
\mathcal{L}({\mathcal{H}}) \rightarrow {\bf 2}$ satisfying
$v(\Diamond {\bf P}_i) = 1$ for each $i$ in the ${\cal
OML}^\Diamond$ structure (Theorem \ref{MANY}). In fact, by Theorem
\ref{MKS} KS only prevents from extending the valuation $v: \Diamond
\mathcal{L}({\mathcal{H}}) \rightarrow {\bf 2}$ to
$\mathcal{L}({\mathcal{H}})$ in a compatible manner. In the wording
previous to Theorem \ref{MKS}, KS theorem prohibits to pass from the
realm of possibility to that of actuality in the sense that it
precludes to establish a compatible actualization for $v: \Diamond
\mathcal{L}({\mathcal{H}}) \rightarrow {\bf 2}$ to
$\mathcal{L}({\mathcal{H}})$ when $dim({\mathcal{H}})> 2$ (see
Theorem \ref{CS3}).

In its algebraic version given in Theorem \ref{CS3}, KS {\it only
imposes a limit to the possibility of establishing compatible
valuations over $\mathcal{L}({\mathcal{H}})$ when
$dim({\mathcal{H}})> 2$} but does not cause incompatibilities when
reference is made to possible global valuations in the realm of
possibilities considered in the ${\cal OML}^\Diamond$ structure or
in the family $\langle {\mathcal{L}}({\mathcal{H}}), v_i({\bf P}_i)
= 1 \rangle_i$ from MWI. Thus, valuations over different i-worlds
are admitted.

\section{CONCLUSIONS}

In this paper we have analyzed the orthomodular formal structure of
quantum mechanics  in relation to both  MWI and MI. In order to deal
with logical possibility in these interpretations, we considered two
different algebraic approaches which were characterized in Sec. 3
and 4. In the case of MWI, the structure is the family of pairs
$<{\mathcal{L}}({\mathcal{H}}), v_{i}(P_{i})=1>_{i}$ of orthomodular
lattices with a distinguished Boolean valuation that assigns
``true'' to a projector of a spectral algebra in each one of them.
For MI, we have the Boolean saturated orthomodular lattice. Both
structures allows us to compare the role of contextuality in
relation to the formal account of actual and possible properties in
a rigorous way  as it is shown in Table 1.

The modal scheme we developed in  \cite{DFR1},  i.e. the Boolean
saturated orthomodular lattice, is also adequate to consider the
notion of possibility within MWI. The whole set of possible worlds,
each one with an actualized value of a property, is algebraically
equivalent to the set of valuations to ``true'' of the possible
properties in $\mathcal{L}^{\Diamond}$. That is to say, the
actualization of each value of a given property in each i-world is
analogous to the assignment of the value ``true'' to all possible
properties in the scheme of MI.

We have shown that the KS theorem only imposes a limit to the
possibility of establishing compatible valuations over
$\mathcal{L}({\mathcal{H}})$. However, there is no incompatibility
when reference is only made to valuations in the realm of
possibilities, i.e., in the ${\cal OML}^\Diamond$ structure for MI
or in the family $\langle {\mathcal{L}}({\mathcal{H}}), v_i({\bf
P}_i) = 1 \rangle_i$ for MWI.

\begin{table}\label{TABLA}
\begin{center}
\begin{tabular}{cp{2in}p{2in}}
& \multicolumn{1}{c}{\it Modality}
& \multicolumn{1}{c}{\it MWI} \\
\it Valuations & There exists a Boolean valuation
$v:\Diamond(\mathcal{L}({\mathcal{H}})) \rightarrow {\bf 2}$ such
that $v(\Diamond P_i) = 1$ for each $i$. &
Families of Boolean valuations $(v_i(P_i) = 1)_i$, may be simultaneously considered, each member being realized in each different $i$-world. \vspace{0.3cm}\\

\it KS theorem & precludes to establish compatible actualizations
for $v:\Diamond(\mathcal{L}({\mathcal{H}})) \rightarrow {\bf 2}$ to
$\mathcal{L}({\mathcal{H}})$. & does not cause incompatibility when
each member of a family of valuations $(v_i(P_i) = 1)_i$ is
considered.
\end{tabular}
\end{center}
\caption {Modality and MWI under ${\cal OML}^{\diamond}$-structure}
\label{compar}
\end{table}

\section*{Acknowledgements}

This work was partially supported by the following grants: PICT
04-17687 (ANPCyT), PIP N$^o$ 6461/05 (CONICET), UBACyT N$^o$ X081
and X204 and Projects of the Fund for Scientific Research Flanders
G.0362.03 and G.0452.04.

\end{document}